\documentclass[10pt, conference]{IEEEtran}
\ifCLASSINFOpdf
\else
\fi
\usepackage{setspace}
\usepackage{verbatim}
\usepackage{cite}
\usepackage{amsmath,amssymb,amsfonts,amsthm}
\usepackage{algorithm}
\usepackage{graphicx}
\usepackage{comment}
\usepackage{textcomp}
\usepackage{color}
\usepackage{stfloats}

\theoremstyle{plain}

\theoremstyle{definition}

\newtheorem{lemma}{Lemma}

\theoremstyle{remark}

\usepackage{hyperref}

\def\BibTeX{{\rm B\kern-.05em{\sc i\kern-.025em b}\kern-.08em
    T\kern-.1667em\lower.7ex\hbox{E}\kern-.125emX}}
\usepackage{bm}      
\usepackage{enumitem}
\usepackage{algpseudocode}
\usepackage{amsmath,amssymb}

\newcommand\blfootnote[1]{%
  \begin{NoHyper}%
  \renewcommand\thefootnote{}\footnote{#1}%
  \addtocounter{footnote}{-1}%
  \end{NoHyper}%
}

\begin{document}

\title{\huge Joint Device Pairing and Bandwidth Allocation Optimisation for Semantic Feature Multiple Access Networks}

\author{
\IEEEauthorblockN{Jiaxiang Wang\IEEEauthorrefmark{1},
Zhaohui Yang\IEEEauthorrefmark{2}\IEEEauthorrefmark{3},
Mingzhe Chen\IEEEauthorrefmark{4},
and Mohammad Shikh-Bahaei\IEEEauthorrefmark{1}}
	\IEEEauthorblockA{
 		$\IEEEauthorrefmark{1}$Department of Engineering, 
 King's College London,
London, UK \\
			$\IEEEauthorrefmark{2}$College of Information Science and Electronic Engineering, Zhejiang University, Hangzhou, China\\
   	$\IEEEauthorrefmark{3}$Zhejiang Provincial Key Laboratory of Info. Proc., Commun. \& Netw. (IPCAN), Hangzhou, China\\
        $\IEEEauthorrefmark{4}$ Department of Electrical and Computer Engineering and Institute for Data Science and Computing, University of Miami\\
            E-mails: 
jiaxiang.wang@kcl.ac.uk,
yang\_zhaohui@zju.edu.cn,
mingzhe.chen@miami.edu,
m.sbahaei@kcl.ac.uk
		}}

\maketitle

\begin{abstract}
This paper presents a Semantic Feature Multiple Access (SFMA) framework for multi-user semantic communication in downlink wireless systems. By extending SwinJSCC to a two-user superimposition paradigm, SFMA enables simultaneous semantic transmission to multiple users over shared time-frequency resources. A key innovation is the Cross-User Attention (CUA) module, which facilitates controlled semantic feature exchange between paired users by leveraging inter-image similarity while mitigating interference. We formulate a joint user pairing and resource allocation problem to minimize global semantic distortion under constraints on bandwidth, end-to-end latency, and energy. This mixed-integer non-convex problem is decomposed into a Minimum-Weight Perfect Matching (MWPM) sub-problem and a convex bandwidth allocation feasibility check, with semi-closed-form bandwidth bounds derived from a strictly concave rate expression. A polynomial-time algorithm based on Blossom matching and bisection search is proposed. Extensive simulations on ImageNet-100 show that SFMA significantly improves reconstruction quality across pairing modes, and the proposed optimization effectively reduces overall distortion while satisfying physical-layer constraints.

\blfootnote{This work was supported in part by the U.S. National Science Foundation under Grant SaTC-2520900.}
\end{abstract}

\vspace{-6mm}\section{Introduction}
The advent of 6G is driving a paradigm shift from traditional bit-centric communication to semantics-driven systems \cite{saad2019vision,wang2022performance,yang2025privacy,yang2023energy,wei2025optimizing,xu2025transformer}, where the goal is meaning recovery rather than raw bit fidelity. In this context, deep learning enabled joint source–channel coding (JSCC) learns an end-to-end mapping between source and channel symbols, achieving robust performance under low SNRs \cite{bourtsoulatze2019deep}. While single-user semantic (or deep JSCC) systems have received intensive attention, semantic communications in multi-user multiple access (MA) settings remain under-explored. Conventional multiple access schemes such as Non-Orthogonal Multiple Access (NOMA) \cite{li2023non} and Rate Splitting Multiple Access (RSMA) \cite{zhao2025compression} operate in the signal or symbol domain and focus on interference management, thus failing to exploit semantic feature correlations among users. In contrast, the proposed Semantic Feature Multiple Access (SFMA) operates in the semantic feature domain, improving bandwidth efficiency under limited resources. However, designing SFMA-based communication schemes must address key challenges: (i) semantic feature interaction—how to exploit redundancy among correlated user features; (ii) resource allocation under semantics—jointly considering semantic distortion, feature correlation, user pairing, and bandwidth/power/time allocation; (iii) scalable algorithmic design—since the joint problem is highly non-convex, efficient methods are required.

Recent research \cite{wang2025semantic,bo2024deep} on multi-user semantic communications has explored a variety of system architectures and optimization strategies. Wang et al. \cite{wang2025semantic} investigated semantic interference in multi-user resource allocation, modeling the semantic interference factor as a function of both transmit power and signal-to-noise ratio (SNR). Meanwhile, Bo et al. \cite{bo2024deep} introduced DeepSCM, a deep learning-based superposition coded modulation scheme that hierarchically encodes semantic information into basic and enhanced feature streams. 
Despite these advances, 
a systematic methodology for the joint optimization of user pairing and resource allocation in multi-user semantic communication systems remains largely unaddressed.
The main contribution of this work is a novel SFMA framework that enables similarity-conditioned cross-user semantic feature fusion, and user-pairing optimisation for multi-user downlink transmission. Our key contributions include:
\begin{itemize}
    \item We propose a novel SFMA framework that enables semantic-level superposition for downlink multi-user semantic communications. By grouping users into pairs and combining their encoded features in the semantic domain, SFMA allows simultaneous transmission over shared time-frequency resources, significantly improving bandwidth efficiency and reconstruction quality.
    \item We design a Cross-User Attention (CUA) module that leverages inter-user semantic correlation to guide feature fusion. Incorporating windowed and shifted-window cross-attention with a cosine-similarity gate, CUA adaptively controls semantic exchange, enhancing feature reuse while mitigating interference, without additional bandwidth cost.
    \item We formulate a joint user pairing and bandwidth allocation problem to minimize total semantic distortion under bandwidth, power, latency, and energy constraints. The mixed-integer non-convex problem is decomposed into a Minimum-Weight Perfect Matching (MWPM) problem and a convex feasibility subproblem. We derive closed-form bandwidth bounds and develop a polynomial-time algorithm using Blossom matching and bisection search, ensuring both optimality and scalability.
\end{itemize}

\vspace{-3mm}

\section{System Model}
\vspace{-2mm}
\addtolength{\topmargin}{0.03in}

\begin{figure}[!t]
    \centerline{\includegraphics[width=0.5\textwidth]{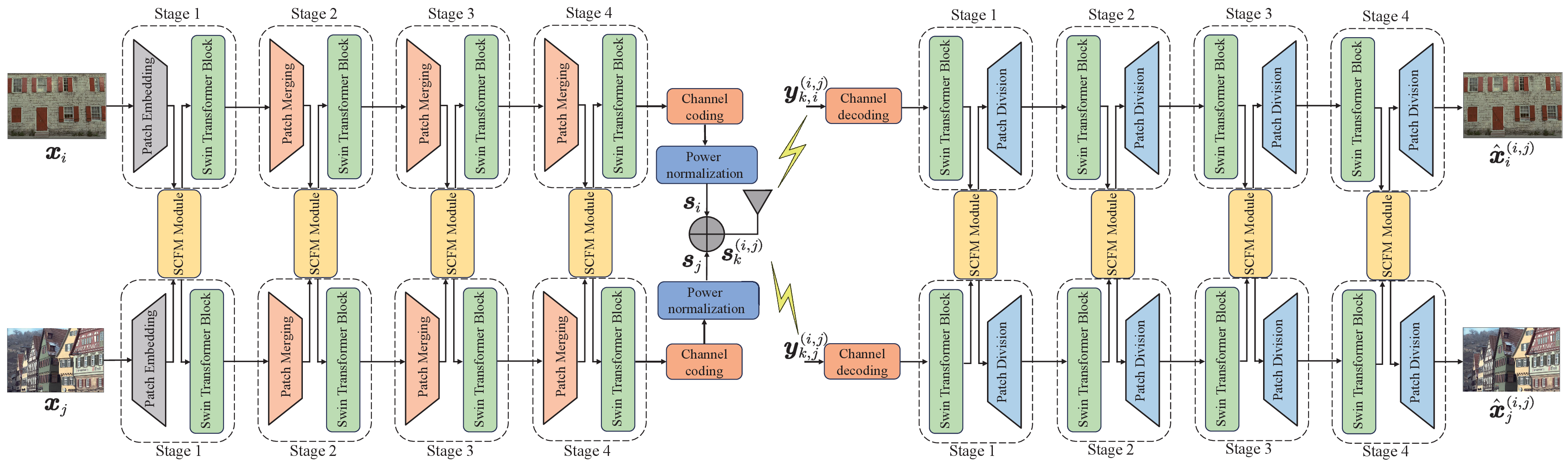}}
    \caption{System architecture of the proposed SFMA framework.}
    \label{fig: System model}
\end{figure}

We consider a downlink semantic communication system where one BS serves a set $\mathcal{N}$ of $N$ users using semantic communication techniques. 
The users are divided into a set $\mathcal{K}$ of $K$ user groups and each group includes two users, where $K=N/2$. The BS implements semantic feature based multiple access (SFMA) \cite{wang2025generative} method to combine the data of the two users $i,j \in \mathcal{N}$ in a group $k\in\mathcal{K}$ in the semantic domain. Then, the BS transmits the semantic information to the two users in a group using the same time and frequency resource. The users use semantic decoders to process the received semantic information and regenerate the data that the BS transmits. 
\vspace{-3mm}
\subsection{Semantic Feature-based Multiple Access Method}
\vspace{-1mm}

\vspace{-1mm}
For each user $u \in \{i,j\}$, the input image $\bm{x}_u \in \mathbb{R}^{H \times W \times 3}$ undergoes patch embedding to generate the initial token sequence $\bm{x}_u^1 \in \mathbb{R}^{L_1 \times d_1}$, where $L_1 = \frac{H}{2} \times \frac{W}{2}$ and $d_1$ is the feature dimension.
The encoder comprises four hierarchical stages that progressively transform the token sequence. 
At Stage 1, the input $\bm{x}_u^1 \in \mathbb{R}^{L_1 \times d_1}$ is processed through the CUA module and $N_1$ Swin Transformer blocks, producing the intermediate output $\bm{x}_u^{1,\text{out}} \in \mathbb{R}^{L_1 \times d_1}$. Patch merging then maps $\bm{x}_u^{1,\text{out}}$ to $\bm{x}_u^2 \in \mathbb{R}^{L_2 \times d_2}$, where $L_2 = L_1/4$ and $d_2 = 2d_1$.
At Stage 2, $\bm{x}_u^2$ is processed through the CUA module and $N_2$ Swin Transformer blocks to yield $\bm{x}_u^{2,\text{out}} \in \mathbb{R}^{L_2 \times d_2}$, which is then merged into $\bm{x}_u^3 \in \mathbb{R}^{L_3 \times d_3}$, where $L_3 = L_2/4$ and $d_3 = 2d_2$.
Similarly, Stage 3 processes $\bm{x}_u^3$ through the CUA module and $N_3$ Swin Transformer blocks to obtain $\bm{x}_u^{3,\text{out}} \in \mathbb{R}^{L_3 \times d_3}$, and patch merging further produces $\bm{x}_u^4 \in \mathbb{R}^{L_4 \times d_4}$, where $L_4 = L_3/4$ and $d_4 = 2d_3$.
Finally, at Stage 4, $\bm{x}_u^4$ is processed by the CUA module and $N_4$ Swin Transformer blocks to produce the final encoder output $\bm{x}_u^{4,\text{out}} \in \mathbb{R}^{L_4 \times d_4}$.
After the final stage, the token sequence $\bm{x}_u^{4,\text{out}}$ is aggregated into a compact semantic feature vector $\bm{z}_u \in \mathbb{R}^{d_4}$, which is then normalized to form the transmitted signal $\bm{s}_u = \sqrt{\frac{d_4}{\mathbb{E}[\|\bm{z}_u\|^2_2]}} \bm{z}_u, u \in \{i,j\}$.
The BS superimposes the signals of paired users $i$ and $j$ as:
\begin{align}
    \bm{s}_k^{(i,j)}= \sqrt{\frac{p_k}{2}} (\bm{s}_i +  \bm{s}_j), \quad k \in \mathcal{K},
\end{align}
where $p_k$ is the transmit power for group $k$. The received signal at user $u$ is:
\begin{align}
\label{eq: received signal}
    \bm{y}^{(i,j)}_{k,u} =  {h}_{u} \bm{s}_k^{(i,j)} + \bm{n}_u, \quad u \in \{i,j\},
\end{align}
where $\bm{n}_u \sim \mathcal{N}(0, \sigma^2_u \bm{I})$ is the additive white Gaussian noise (AWGN) with the power $\sigma^2_u$, and ${h}_{u}$ is the channel gain from the BS to user $u$.
The signal ${\bm{y}}^{(i,j)}_{k,u}$ is then processed by the channel decoder to output $\hat{\bm{y}}^{(i,j)}_{k,u}$ and the semantic decoder $\bm{D}_{{\bm{\phi}_u}}(\cdot)$ with trainable parameters $\bm{\phi}_u$ to output $\hat{\bm{x}}^{(i,j)}_u$:
\begin{align}
\label{eq: decoder}
    \hat{\bm{x}}^{(i,j)}_u = \bm{D}_{\bm{\phi}_u}(\hat{{\bm{y}}}^{(i,j)}_{k,u}), \quad u \in \{i,j\}.
\end{align}

\subsection{The Mechanism of the Proposed CUA Module}
\vspace{-3mm}
The CUA module is inserted at each stage $\ell$ of the SwinJSCC encoder to enable semantic feature interaction between paired users. 
Let $\bm{x}_i^\ell, \bm{x}_j^\ell \in \mathbb{R}^{L_\ell \times d_\ell}$ denote the token sequences of users $i$ and $j$ at stage $\ell$. 
They are partitioned into $N_W^\ell = L_\ell / M^2$ non-overlapping patches, and each patch can be considered as a window with size $M \times M$. 
The CUA module updates $\bm{x}_i^\ell$ through a two-step cross-attention process:
\vspace{-6mm}
\begin{subequations}
\begin{align}
\check{\bm{x}}_i^\ell &= \mathcal{W}_{i \leftarrow j}(\mathcal{L}(\bm{x}_i^\ell), \bm{x}_j^\ell) + \bm{x}_i^\ell, \\
\bar{\bm{x}}_i^\ell &= \mathcal{M}(\mathcal{L}(\check{\bm{x}}_i^\ell)) + \check{\bm{x}}_i^\ell, \\
\tilde{\bm{x}}_i^\ell &= \mathcal{T}_{i \leftarrow j}(\mathcal{L}(\bar{\bm{x}}_i^\ell), \bar{\bm{x}}_j^\ell) + \bar{\bm{x}}_i^\ell, \\
\bm{x}_i^{\ell,\text{out}} &= \mathcal{M}(\mathcal{L}(\tilde{\bm{x}}_i^\ell)) + \tilde{\bm{x}}_i^\ell,
\end{align}
\end{subequations}
where $\mathcal{L}(\cdot)$ denotes LayerNorm, $\mathcal{M}(\cdot)$ is a multilayer perceptron, 
$\mathcal{W}(\cdot)$ applies windowed multi-head cross-attention inside non-overlapping $M \times M$ windows between $\bm{x}^\ell_i$ and $\bm{x}^\ell_j$ to capture the useful semantic information from user $j$ to user $i$, and 
$\mathcal{T}_{i \leftarrow j}(\cdot,\cdot)$ performs the shifted-window cross-attention to capture cross-window dependencies from user $j$ to user $i$. 
Specifically, for each window $m \in \{1, \dots, N_W^\ell\}$, the cross-attention outputs are computed as:
\begin{subequations}
\begin{align}
\mathcal{W}_{i \leftarrow j}^m &= \gamma \cdot g_m \cdot \mathcal{A}(\bm{x}_{i,m}^\ell, \bm{x}_{j,m}^\ell), \\
\mathcal{T}_{i \leftarrow j}^m &= \gamma \cdot g_m \cdot \mathcal{S}^{-1}\!\left[\mathcal{A}(\mathcal{S}(\bm{x}_{i,m}^\ell), \mathcal{S}(\bm{x}_{j,m}^\ell))\right],
\end{align}
\end{subequations}
where $\mathcal{A}(\cdot)$ denotes multi-head cross-attention with queries from $\bm{x}_{i,m}^\ell$ and keys/values from $\bm{x}_{j,m}^\ell$, and $\mathcal{S}(\cdot)$ is the window-shifting operation \cite{liu2021swin}. 
To control the degree of semantic information transfer from user $j$ to user $i$, a cosine-similarity gate $g_m \in [0,1]$ is applied to each window $m$ with a learnable residual scale $\gamma$, defined as 
$g_m = \sigma(\kappa \cdot \cos(\bm{x}^\ell_{i,m}, \bm{x}^\ell_{j,m}))$, 
where $\kappa>0$ is a hyperparameter, $\sigma(\cdot)$ is the sigmoid function, and 
$\cos(\bm{x}^\ell_{i,m}, \bm{x}^\ell_{j,m}) = \frac{1}{M^{2}}\sum_{t=1}^{M^{2}}\frac{\langle \bm{x}^{\ell}_{i,m}[t],\, \bm{x}^{\ell}_{j,m}[t] \rangle}{\|\bm{x}^{\ell}_{i,m}[t]\|_{2}\, \|\bm{x}^{\ell}_{j,m}[t]\|_{2}}$ 
is the averaged token-wise cosine similarity between $\bm{x}^\ell_{i,m}$ and $\bm{x}^\ell_{j,m}$. 
After the CUA computation, patch merging maps $(\bm{x}_i^{\ell,\text{out}}, \bm{x}_j^{\ell,\text{out}})$ to $(\bm{x}_i^{\ell+1}, \bm{x}_j^{\ell+1})$ with $L_{\ell+1}=L_\ell/4$ and $d_{\ell+1}=2d_\ell$.

\vspace{-2.5mm}
\subsection{Data Transmission Model} 
\vspace{-1.5mm}

We introduce a binary decision variable to elaborate the user pairing structure. 
For each user $i$, we define a user pairing vector $\boldsymbol{w}_{i} = [\boldsymbol{w}_{i1}, \cdots, \boldsymbol{w}_{ij}, \cdots, \boldsymbol{w}_{iN}]$ with $\boldsymbol{w}_{ij} = [w^1_{ij}, \cdots, w^k_{ij}, \cdots, w^K_{ij}]$ being a user group allocation vector for user pair $(i,j)$, and with $w^k_{ij} \in \{0,1\}$ being group allocation index. In particular, $w^k_{ij}=1$ implies that users $i$ and $j$ are paired and allocated to group $k$, and $w^k_{ij}=0$, otherwise.
Correspondingly, for each user group $k$, we define the bandwidth vector $\boldsymbol{b} = [b_{1}, \cdots, b_{K}]$, where $b_{k}$ is the shared bandwidth in group $k$.
We also define the power allocation vector $\boldsymbol{p} = [p_1, \cdots, p_K]$.
Thus, the data rate of the BS transmitting data to the user $i$ is:
\begin{align}
\label{eq: data rate S}
    r_{i}(\boldsymbol{w}_{i}, \boldsymbol{b}) = \sum_{j\in\mathcal{N} \setminus  \{i\}} \sum^K_{k=1} w^k_{ij}  b_{k} \log_2 \left(1 + \frac{|h_i|^2 p_k}{ 2N_0b_{k} + |h_i|^2 p_k}\right), 
\end{align}
where $N_0$ is the AWGN power spectral density.

\vspace{-2mm}
\subsection{Time Consumption Model}
\vspace{-1.5mm}
The time that each user group $k\in \mathcal{K}$ consumes for end-to-end communication consists of three parts: 1) the delay of model calculation at the BS; 2) the delay of data transmission, and 3) the delay of model calculation at each user's receiver, which are specified as follows. 

\subsubsection{Delay of model calculation at the BS}
The time consumption $\tau^{\text{BS}}_{i}$ of user $i$ for model calculation is denoted as:
\begin{align}
\label{eq: BS calculation time}
    \tau^{\text{BS}}_{i} =  \frac{\chi_{\text{BS}} q_i \Gamma(\boldsymbol{\theta}_i)}{f_{\text{BS}}}, \forall i \in \mathcal{N},
\end{align}
where $f_{\text{BS}}$ is the frequency of the central processing unit (CPU) clock of the BS, $\chi_{\text{BS}}$ is the number of CPU cycles required for computing data (in bits) at the BS side, $q_i$ is the data size of the image $\boldsymbol{x}_i$ that the BS needs to extract semantic information, and $\Gamma(\boldsymbol{\theta}_i)$ is the size of the user $i$'s encoder with the total encoder parameter $\boldsymbol{\theta}_i$. Thus, the total model calculation time at the BS for user group $k$ is
\begin{align}
    \tau^{\text{BS}}_k (\boldsymbol{w}) 
= \sum_{1\le i<j\le N} w^k_{ij}\, (\tau^{\text{BS}}_{i} + \tau^{\text{BS}}_{j}).
\end{align}

\vspace{-2mm}
\subsubsection{Delay of data transmission} 
Since the dimensions of the encoder output are fixed for all users, the data size that needed to transmit to each user is constant, and can be denoted by $Q$. The time that BS transmits semantic information to user $i$ is
\begin{align}
\label{eq: transmission time}
    t_{i}(\boldsymbol{w}_{i}, \boldsymbol{b}) = \frac{Q}{r_{i}(\boldsymbol{w}_{i}, \boldsymbol{b})},
\end{align}
where $Q$ is the data size of the output of each encoder. Thus, the total transmit time for user group $k$ is
\begin{align}
    t_k(\boldsymbol{w}, \boldsymbol{b}) 
= \sum_{1\le i<j\le N} w^k_{ij}\, \max \{t_{i}(\boldsymbol{w}_{i}, \boldsymbol{b}),\, t_{j}(\boldsymbol{w}_{j}, \boldsymbol{b})\}.
\end{align}

\subsubsection{Delay of model calculation at each receiver}
The time consumption $\tau^{\text{Rx}}_{i}$ at user $i$'s receiver for model calculation is:
\begin{align}
\label{eq: receiver calculation time}
    \tau^{\text{Rx}}_{i} =  \frac{\chi_{i} Q \Gamma(\boldsymbol{\phi}_i)}{f_{i}}, \forall i \in \mathcal{N}.
\end{align}
where $f_{i}$ is the frequency of the CPU clock of the user $i$, and $\chi_{i}$ is the number of CPU cycles required for computing data (in bits) at the user $i$'s receiver side, and $\Gamma(\boldsymbol{\phi}_i)$ is the size of the user $i$'s decoder with the parameter $\boldsymbol{\phi}_i$.
Thus, the total time consumption of model calculation at receivers side of the user group $k \in \mathcal{K}$ is calculated as:
\begin{align}
    \tau^{\text{Rx}}_k(\boldsymbol{w})
= \sum_{1\le i<j\le N} w^k_{ij}\, (\tau^{\text{Rx}}_{i}+\tau^{\text{Rx}}_{j}).
\end{align}
Therefore, the total time consumption of user group $k$ is:
\begin{align}
    T_k(\boldsymbol{w}, \boldsymbol{b}) = \tau^{\text{BS}}_k(\boldsymbol{w}) + t_k(\boldsymbol{w}, \boldsymbol{b}) +\tau^{\text{Rx}}_k(\boldsymbol{w}).
\end{align}

\vspace{-5mm}
\subsection{Energy Consumption Model}
The energy consumption of end-to-end system from the BS to user group $k \in \mathcal{K}$ consists of two components: data computing energy consumption, and data transmission energy consumption,
which is illustrated as follows.
\vspace{-1mm}
\subsubsection{Data computing energy consumption}
The energy consumption of the BS extraction semantic information for user $i$ is expressed as:
$\epsilon^{\text{BS}}_i = \zeta_{\text{BS}} f_{\text{BS}}^2 \chi_{\text{BS}} q_i \Gamma(\boldsymbol{\theta}_i)$,
where $\zeta_{\text{BS}}$ is the BS energy consumption coefficient.
The energy consumption of user $i$ recovering original images can be expressed as
$\epsilon^{\text{Rx}}_i = \zeta_i f_i^2 \chi_{i} Q \Gamma(\boldsymbol{\phi}_i)$,
where $\zeta_i$ is the user device energy consumption coefficient.
Therefore, the total data computing energy consumption of user group $k$ is calculated by
\begin{align}
   \epsilon_k(\boldsymbol{w}) 
= \sum_{1\le i<j\le N} w^k_{ij}\,(\epsilon^{\text{BS}}_i+\epsilon^{\text{BS}}_j + \epsilon^{\text{Rx}}_i+\epsilon^{\text{Rx}}_j).
\end{align}

\subsubsection{Data transmission energy consumption}
The energy consumption of data transmission from the BS to user group $k$ is calculated by $e_k(\boldsymbol{w}, \boldsymbol{b}) = p_k t_k(\boldsymbol{w}, \boldsymbol{b})$.
Thus, the total energy consumption $E_k$ of user group $k\in \mathcal{K}$ is calculated by:
\begin{align}
    E_k(\boldsymbol{w}, \boldsymbol{b}) = \epsilon_k(\boldsymbol{w}) + e_k(\boldsymbol{w}, \boldsymbol{b}).
\end{align}

\subsection{The Distortion Model}
The distortion function of user $i$ within user pair $(i,j)$ is:
\begin{align}
\label{eq: distortion function}
    D_i(\boldsymbol{w}_{ij}) = \sum^{K}_{k=1} w^k_{ij} \|\hat{\boldsymbol{x}}^{(i,j)}_i-\boldsymbol{x}_i\|^2_2, \forall i \ne j, i,j \in \mathcal{N},
\end{align}
where $\|\hat{\boldsymbol{x}}^{(i,j)}_i-\boldsymbol{x}_i\|^2_2$ is the MSE between the reconstructed image $\hat{\boldsymbol{x}}^{(i,j)}_i$ and the original image $\boldsymbol{x}_i$ when user $i$ is paired with $j$ to implement SFMA. 
We consider a reliability-guaranteed physical layer (e.g., sufficiently strong channel coding and HARQ) such that the residual decoding error probability is negligible at the target operating point. 

\subsection{Problem Formulation}
Our objective is to minimise the aggregate semantic distortion across all users with respect to system budgets on bandwidth, end‑to‑end latency, and energy consumption.
The optimisation problem is formulated as:
\begin{subequations}
\label{eq: overall problem}
\begin{align}
    & \underset{\boldsymbol{w}, \boldsymbol{b}}{\min}  \sum^{N}_{i=1} \sum_{j\in\mathcal{N} \setminus  \{i\}} D_i(\boldsymbol{w}_{ij}), \tag{17} \\
    &\text{s.t.} \; D_i(\boldsymbol{w}_{ij}) \leq D^{\max}, \forall i \ne j , i,j \in \mathcal{N},  \label{eq: quality requirement} \\
    &  \quad  0 \leq T_{k}(\boldsymbol{w}, \boldsymbol{b}) \leq T^{\max}, \forall k \in \mathcal{K}, \label{eq: time constraint} \\
    &  \quad  \sum^{K}_{k=1} E_k(\boldsymbol{w}, \boldsymbol{b}) \leq E^{\max},  \label{eq: energy constraint} \\
    &  \quad   \sum_{k=1}^{K} {b_{k}}  \leq B^{\max}, b_k \geq 0, \forall k \in \mathcal{K}, \label{eq: bandwidth constraint} \\
    &  \quad  \sum_{i=1}^N\sum^N_{j=i+1}w^k_{ij}=1, \forall k \in \mathcal{K}, \label{eq: w decision variable constraint 1} \\
    &  \quad \sum_{j \ne i, j=1}^N\sum^K_{k=1}w^k_{ij}=1, \forall i \in \mathcal{N}, \label{eq: w decision variable constraint 2} \\
    &  \quad w^k_{ij} \in\{0,1\}, \forall k \in \mathcal{K}, \forall i,j \in \mathcal{N}, \label{eq: nonnegative}
\end{align}
\end{subequations}
where $B^{\max}$ is the maximum bandwidth, $T^{\max}$ is the maximum time delay, $E^{\max}$ is the maximum energy budget. 

\vspace{-2.5mm}
\section{Algorithm Design}
\vspace{-2mm}
In this section, we propose an optimisation algorithm to address the problem \eqref{eq: overall problem}, which contains two sub-problems: a user-pairing sub-problem and a bandwidth allocation sub-problem.
The constraint \eqref{eq: time constraint} is simplified as:
\begin{align}
\label{eq: subproblem1 constraint time}
    t_k(\boldsymbol{w}, \boldsymbol{b}) \leq \sum^{N}_{i=1} \sum_{j\in\mathcal{N} \setminus  \{i\}} w^k_{ij} \Delta_{ij}, \forall k \in \mathcal{K}, 
\end{align}
where $\Delta_{ij} \triangleq T^{\max} - \tau^{\text{BS}}_i - \tau^{\text{Rx}}_i- \tau^{\text{BS}}_j - \tau^{\text{Rx}}_j, \forall i,j \in \mathcal{N}$.
Assume that the user pair $(i,j)$ is assigned to user group $k$ with user $i$ and $j$, i.e., $w^k_{ij} = 1$ and $w^n_{ij} = 0$ for all $n \ne k$. 
For user group $k$ with user $i$ and $j$, the constraint \eqref{eq: subproblem1 constraint time} can be equivalently transformed to the following inequality:
\begin{align} \label{eq: F_i inequality}
    F_u(b_k) \geq \frac{Q}{\Delta_{ij}}, \forall u \in \{i,j\},
\end{align}
where
\begin{align} \label{eq: F_i}
    F_u(b_k) \triangleq b_k \log_2 \left(1 + \frac{|h_u|^2p_k}{2N_0b_k+|h_u|^2p_k} \right).
\end{align}
Based on the above definition, we first establish the following \text{\textbf{Lemma}} to characterise the monotonicity property of $F_u(b_k)$, which will later be used to derive the minimum feasible bandwidth requirement.
\vspace{-3mm}
\begin{lemma} \label{lemma: monotony}
    For any given $p_k >0$ and $u \in \{i,j\}$, $F_u(b_k)$ is strictly increasing with respect to $b_k >0$. Moreover, $F_u(b_k)$ is strictly concave with $\lim_{b_k \to \infty}F_u(b_k)=\frac{|h_u|^2p_k}{2N_0\ln2}$. Hence, \eqref{eq: F_i inequality} admits a unique root $b_u^{\min}(\Delta_{ij})$ if and only if  $\frac{Q}{\Delta_{ij}} < \frac{|h_u|^2p_k}{2N_0\ln2}$.
\end{lemma}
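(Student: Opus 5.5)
The plan is to rewrite $F_u$ as the perspective of a one-variable concave function. Set $a \triangleq |h_u|^2 p_k>0$ and work in nats, so that $F_u(b_k)=\frac{1}{\ln 2}\, b_k\, g(1/b_k)$ with
\begin{align*}
g(x) \triangleq \ln\!\left(\frac{2N_0+2ax}{2N_0+ax}\right), \quad x\ge 0 .
\end{align*}
This identity follows by dividing numerator and denominator of $1+\frac{a}{2N_0b_k+a}=\frac{2N_0b_k+2a}{2N_0b_k+a}$ by $b_k$.

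The first step is to analyse $g$. A direct computation gives
\begin{align*}
g'(x)=\frac{a}{N_0+ax}-\frac{a}{2N_0+ax}=\frac{aN_0}{(N_0+ax)(2N_0+ax)}>0 .
\end{align*}
This expression is strictly decreasing in $x\ge 0$, so $g$ is strictly increasing and strictly concave, with $g(0)=0$, $g'(0)=\frac{a}{2N_0}$, and $0\le g<\ln 2$.

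From these properties, each claim of Lemma~\ref{lemma: monotony} follows by one-line calculus on $\phi(b)\triangleq b\,g(1/b)$.
\begin{itemize}
\item \emph{Monotonicity.} We have $\phi'(b)=g(x)-x g'(x)$ with $x=1/b$. Strict concavity and $g(0)=0$ give $g(x)>x g'(x)$ for every $x>0$, since $g(x)=\int_0^x g'(s)\,ds>x g'(x)$. Hence $\phi'(b)>0$.
\item \emph{Concavity.} Differentiating once more yields $\phi''(b)=g''(1/b)/b^3<0$. This is the standard fact that the perspective of a strictly concave function is strictly concave.
\item \emph{Limit at infinity.} We have $\lim_{b\to\infty}\phi(b)=\lim_{x\to0^+}\frac{g(x)-g(0)}{x}=g'(0)=\frac{a}{2N_0}$. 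Therefore $\lim_{b_k\to\infty}F_u(b_k)=\frac{|h_u|^2p_k}{2N_0\ln2}$.
\item \emph{Limit at zero.} Since $g<\ln 2$, we get $\phi(b)\le b\ln 2\to 0$ as $b\to0^+$. Hence $F_u(0^+)=0$.
\end{itemize}

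For the root statement, I would treat $\Delta_{ij}>0$ as the implicit standing assumption; otherwise the time constraint is infeasible anyway. Then $F_u$ is continuous and strictly increasing on $(0,\infty)$. Its range is exactly the open interval $\bigl(0,\frac{|h_u|^2p_k}{2N_0\ln2}\bigr)$: the supremum is approached but never attained, by strict monotonicity.

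By the intermediate value theorem, the equation $F_u(b_k)=Q/\Delta_{ij}$ therefore has a solution if and only if $Q/\Delta_{ij}$ lies strictly below this supremum. The solution is unique by strict monotonicity, and I denote it $b_u^{\min}(\Delta_{ij})$. The inequality \eqref{eq: F_i inequality} then holds precisely on $[b_u^{\min}(\Delta_{ij}),\infty)$.

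The main obstacle is the sign of $\phi'$, i.e.\ proving monotonicity without messy algebra. Differentiating $F_u$ directly in $b_k$ produces an awkward mix of a logarithm and rational terms. The perspective viewpoint reduces it to the clean tangent-line inequality $g(x)>xg'(x)$ for concave $g$ with $g(0)=0$, so I would organise the whole argument around it.
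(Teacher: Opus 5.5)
Your proof is correct, and it reaches the lemma by a different route from the paper. The paper differentiates $F_u$ directly in $b_k$, writes out closed forms for $F_u'$ and $F_u''$, and simply asserts $F_u'>0$ and $F_u''<0$. The limit and the root characterisation are then stated rather than derived.

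Your perspective representation $F_u(b_k)=\frac{1}{\ln 2}\,b_k\,g(1/b_k)$ reproduces the paper's first derivative exactly: $g(x)-xg'(x)$ at $x=1/b_k$ is the paper's $\ln 2\cdot F_u'$. The difference is that it turns the paper's unproved sign claim into the tangent-line inequality $g(x)>xg'(x)$ for a strictly concave $g$ with $g(0)=0$.

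It also gives the second derivative cleanly as $F_u''(b_k)=g''(1/b_k)/(b_k^3\ln 2)=-\frac{N_0a^2(3N_0b_k+2a)}{\ln 2\,(N_0b_k+a)^2(2N_0b_k+a)^2}$, with $a=|h_u|^2p_k$. This doubles as a check on the paper. Its displayed $F_u''$ does not match this expression; as $b_k\to 0^+$ it is smaller in magnitude by a factor of $4$. Its sign, which is all the lemma uses, is nevertheless right.

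You also supply what the ``if and only if'' actually rests on, which the paper leaves implicit:
\begin{itemize}
\item continuity of $F_u$ and $F_u(0^+)=0$;
\item the fact that the supremum $\frac{|h_u|^2p_k}{2N_0\ln 2}$ is approached but never attained;
\item the intermediate value argument;
\item the standing assumption $\Delta_{ij}>0$.
\end{itemize}

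The paper's direct computation has the merit of displaying $F_u'$ explicitly in $b_k$, which is what feeds $G_{u,k}=p_kQF_u'/F_u^2$ in Lemma~\ref{lemma: bandwidth allocation}. Your route recovers the same expression while making every claim of the lemma rigorous.
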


\proof Since $F_u'(b_k)= \log_{2}\!\left(\frac{2N_0 b_k + 2|h_u|^{2} p_k}{2N_0 b_k + |h_u|^{2} p_k}\right)- \frac{2N_0 b_k|h_u|^{2} p_k}{\ln 2(2N_0 b_k + 2|h_u|^{2} p_k)(2N_0 b_k + |h_u|^{2} p_k)} > 0$, and $F''_u(b_k) = -\frac{2N_0 |h_u|^{2} p_k}{\ln 2}\cdot \frac{2N_0 b\,(2N_0 b+2|h_u|^{2} p_k)+(2N_0 b+|h_u|^{2} p_k)^2} {(2N_0 b+2|h_u|^{2} p_k)^2(2N_0 b+|h_u|^{2} p_k)^2} <0$, $F_u(b_k)$ is strictly concave and increasing with respect to $b_k$. Hence, $\lim_{b_k \to \infty} F_{u}(b_k) = \frac{|h_u|^2p_k}{2N_0\ln2}$.

According to \text{\textbf{Lemma}} \ref{lemma: monotony}, the inequality \eqref{eq: F_i inequality} exists exactly one root $b^{\min}_u$ if and only if $\frac{Q}{\Delta_{ij}} < \frac{|h_u|^2p_k}{2N_0\ln2}, \forall u \in \{i,j\}$. Consequently, the joint latency constraint for user pair $(i,j)$ can be translated into a minimum required bandwidth as
\begin{align}
    b^{\min}_{ij} = \max \{b^{\min}_i(\Delta_{ij}), b^{\min}_j(\Delta_{ij})\}.
\end{align}
Accordingly, the delay constraint \eqref{eq: subproblem1 constraint time} is reformulated as a bandwidth lower-bound condition \eqref{eq: b_k constraint 1}
\begin{align} \label{eq: b_k constraint 1}
    b_k \geq \sum_{1\leq i <j \leq N} w^k_{ij}b^{\min}_{ij}, \forall k \in \mathcal{K},
\end{align}
which provides a tractable convex form for subsequent feasibility verification.

After reformulating the delay constraint into an explicit bandwidth lower bound in \eqref{eq: b_k constraint 1}, we next address the energy constraint \eqref{eq: energy constraint}, which can be equivalently written as
\begin{align} \label{eq: b_k constraint 2}
    \sum^N_{i=1} \sum^{N}_{j=i+1} \sum^K_{k=1} w^k_{ij} p_k \xi_k(b_k) \leq E^{\max} - E^{\text{const}},
\end{align}
where $\xi_k(b_k) \triangleq \max \{\frac{Q}{F_i(b_k)},\frac{Q}{F_j(b_k)}\}$, and $E^{\text{const}} = \sum^N_{i=1}[ \zeta_{\text{BS}}f_{\text{BS}}^2q_i\Gamma(\boldsymbol{\theta}_i)+\zeta_i(f_i)^2 Q \Gamma(\boldsymbol{\phi}_i)] $ is a constant. Let $t_u(b_k) =\frac{Q}{F_u(b_k)}, \forall u \in \{i,j\}$. From \textit{\textbf{Lemma}} \ref{lemma: monotony}, we know that $t_u(b_k)$ is monotonically decreasing when $b_k >0, \forall k \in \mathcal{K}$.

Problem \eqref{eq: overall problem} reveals that the objective depends solely on the matching variables $\boldsymbol{w}_i$, while the continuous bandwidth vector $\boldsymbol{b}$ only affects the feasibility of a given matching. That is, for any candidate matching satisfying the distortion constraints in \eqref{eq: quality requirement}, the bandwidth allocation $\boldsymbol{b}$ is optimized merely to verify whether constraints \eqref{eq: energy constraint}, \eqref{eq: bandwidth constraint}, and \eqref{eq: subproblem1 constraint time} can be jointly met. If no feasible $\boldsymbol{b}$ exists, the matching is deemed infeasible. Therefore, the original mixed-integer problem can be equivalently decomposed into: (i) a minimum-weight perfect matching (MWPM) problem over $\boldsymbol{w}_i$ to determine the distortion-optimal pairing, and (ii) a convex bandwidth allocation subproblem that acts as a feasibility check under delay, energy, and bandwidth constraints.

\setlength{\columnsep}{0.02 in}

To include \eqref{eq: quality requirement} in the objective function, we define:
\begin{align} \label{eq: edge weight}
C_{ij} =
    \begin{cases}
    d_{ij}, & \text{if } D_i(\boldsymbol{w}_{ij})\leq D^{\max}, D_j(\boldsymbol{w}_{ij})\leq D^{\max}, \\
    +\infty, & \text{otherwise},
    \end{cases}
\end{align}
where $d_{ij} = \|\hat{\boldsymbol{x}}^{(i,j)}_i-\boldsymbol{x}_i\|^2_2 + \|\hat{\boldsymbol{x}}^{(i,j)}_j-\boldsymbol{x}_j\|^2_2$, $\forall i \ne j, i,j \in \mathcal{N}$. We define a complete graph with nodes $\mathcal{N}$ and edge weight \eqref{eq: edge weight}. The pairing variable $\boldsymbol{w}$ is a $0$-$1$ indicator of a perfect matching satisfying constraints \eqref{eq: w decision variable constraint 1}-\eqref{eq: nonnegative}. Then, problem \eqref{eq: overall problem} can be reformulated as a standard MWPM problem in \eqref{eq: subsubproblem 1} plus feasibility verification in \eqref{eq: bandwidth allocation sub-problem}, which is denoted as:
\begin{subequations}
\label{eq: subsubproblem 1}
\begin{align}
     \min_{\boldsymbol{w}}& \sum^N_{i=1}\sum^N_{j=i+1}\sum^K_{k=1}w^k_{ij}C_{ij}   \tag{25} \\
     \text{s.t.} \; & \eqref{eq: w decision variable constraint 1}, \eqref{eq: w decision variable constraint 2},
\end{align}
\end{subequations}
and a bandwidth allocation sub-problem:
\begin{subequations}
\label{eq: bandwidth allocation sub-problem}
\begin{align}
    \min_{\boldsymbol{b}}&  \sum^K_{k=1} f_k(b_k) \tag{26} \\
    \text{s.t.} \; & \eqref{eq: bandwidth constraint}, \eqref{eq: b_k constraint 1}, \eqref{eq: b_k constraint 2},
\end{align}
\end{subequations}
where $f_k(b_k) \triangleq \sum_{1\leq i < j \leq N}w^k_{ij}p_k\xi_k(b_k)$.
We solve the MWPM problem on a complete undirected graph with node set $\mathcal{N}$ and edge weights defined in \eqref{eq: edge weight}. The MWPM problem is solved by an Edmonds‑type Blossom algorithm \cite{kolmogorov2009blossom} in $\mathcal{O}(N^3)$ time.
Problem \eqref{eq: subsubproblem 1} only determines the pairing method $\boldsymbol{w}$ that results in the least distortion. The sub-problem \eqref{eq: bandwidth allocation sub-problem} is only used to verify whether all constraints can be simultaneously satisfied under the given $\boldsymbol{w}$. If the sub-problem \eqref{eq: bandwidth allocation sub-problem} is not feasible, then proceed to the next candidate $\boldsymbol{w}$.
Problem \eqref{eq: bandwidth allocation sub-problem} is convex, and it can be solved by \textit{\textbf{Lemma}} \ref{lemma: bandwidth allocation}. 
\vspace{-3mm}
\begin{lemma} \label{lemma: bandwidth allocation}
Given a matching $\boldsymbol{w}$, problem \eqref{eq: bandwidth allocation sub-problem} is convex. Its optimal solution satisfies
\begin{align}
    b_k^{\star} = 
    \max\!\left\{L_k(\boldsymbol{w}),\,\tilde{b}_k\!\left(\Theta^{\star}\right)\right\},
    \quad \forall k \in \mathcal{K},
\end{align}
where $L_k(\boldsymbol{w}) = \sum_{1 \le i < j \le N} w_{ij}^k b_{ij}^{\min}$ denotes the minimum required bandwidth for the user group $k$, and
\begin{align}
    \tilde{b}_k(\Theta) =
\begin{cases}
G^{-1}_{i,k}(\Theta), & \text{if } \dfrac{Q}{F_i(G^{-1}_{i,k}(\Theta))} \geq \dfrac{Q}{F_j(G^{-1}_{i,k}(\Theta))}, \\
G^{-1}_{j,k}(\Theta), & \text{otherwise}.
\end{cases}
\end{align}
Here, $G_{u,k} \triangleq \frac{p_kQF'_u(b_k)}{F^2_u(b_k)}$ is strictly decreasing with respect to $b_k>0$ and thus invertible, and $G^{-1}_{u,k}(\cdot)$ is the inverse of $G_{u,k}(\cdot)$. $\Theta^{\star} \in (0,\Theta^{\max}]$ is the Lagrange multiplier associated with the total bandwidth constraint \eqref{eq: bandwidth constraint} determined by the bisection method so that $\sum_{k=1}^{K}b^\star_k(\Theta^\star)= B^{\max}$, with $\Theta^{\max} = \max_{1 \le k \le K} \{G_{i,k}(L_k), G_{j,k}(L_k)\}$.
\end{lemma}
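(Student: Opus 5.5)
The plan is to treat problem \eqref{eq: bandwidth allocation sub-problem} group by group. Once $\boldsymbol{w}$ is fixed, each group $k$ contains exactly one pair $(i,j)$, so $f_k(b_k)=p_k\max\{t_i(b_k),t_j(b_k)\}$ with $t_u(b_k)=Q/F_u(b_k)$.

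\textbf{Convexity.} By Lemma~\ref{lemma: monotony}, $F_u$ is positive, strictly increasing and strictly concave on $b_k>0$. The map $x\mapsto Q/x$ is convex and nonincreasing on $x>0$, and composing it with a concave function preserves convexity. Hence each $t_u$ is convex and strictly decreasing. A pointwise maximum of convex functions is convex, so $\xi_k$ and $f_k$ are convex.

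The constraints are then handled one at a time:
\begin{itemize}
\item \eqref{eq: bandwidth constraint} and \eqref{eq: b_k constraint 1} are linear.
\item \eqref{eq: b_k constraint 2} is a sublevel set of the convex function $\sum_k f_k$, hence a convex set.
\end{itemize}
This gives convexity of the whole problem. Because the objective equals the left-hand side of \eqref{eq: b_k constraint 2}, the energy constraint is either inactive at the optimum or makes the problem infeasible. It therefore does not need its own multiplier; it only needs to be checked at the end.

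\textbf{KKT analysis.} Introduce $\Theta\ge 0$ for $\sum_k b_k\le B^{\max}$ and $\mu_k\ge 0$ for $b_k\ge L_k(\boldsymbol{w})$. Since each $f_k$ is strictly decreasing, the optimum uses all the bandwidth, so $\Theta^\star>0$ and $\sum_k b_k^\star=B^{\max}$.

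On the region where $t_u$ is the active (larger) branch of the max, $f_k$ is differentiable with $-f_k'(b_k)=p_kQF_u'(b_k)/F_u^2(b_k)=G_{u,k}(b_k)$. Stationarity then reads $G_{u,k}(b_k)=\Theta-\mu_k$. With complementary slackness, either $\mu_k=0$ and $b_k=G^{-1}_{u,k}(\Theta)$, or $b_k=L_k$. This combines to $b_k^\star=\max\{L_k,\tilde b_k(\Theta^\star)\}$. The case split in $\tilde b_k$ selects whichever branch $u\in\{i,j\}$ attains the max of $Q/F_u$ at the candidate point.

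\textbf{Main obstacle.} The hard step is showing that $G_{u,k}$ is strictly decreasing, so that $G^{-1}_{u,k}$ exists. I would use $G_{u,k}=-p_k\,t_u'$, which reduces the claim to strict convexity of $t_u$. Computing
\[
t_u''=Q\bigl(2F_u'^2-F_uF_u''\bigr)/F_u^3 ,
\]
the bracket is positive because $F_u''<0$ and $F_u>0$. So $t_u''>0$ and $G_{u,k}$ is strictly decreasing, as required.

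A second subtlety is the kink where $t_i=t_j$. There I would use the subdifferential of the max instead of a gradient and argue that the selection rule in $\tilde b_k$ still yields a valid subgradient condition. This works because the subdifferential is the interval between the two branch derivatives.

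\textbf{Bisection.} Each $\tilde b_k(\Theta)$ is nonincreasing in $\Theta$, so $\sum_k\max\{L_k,\tilde b_k(\Theta)\}$ is monotone in $\Theta$. At $\Theta=\Theta^{\max}$, every $\tilde b_k\le L_k$ by monotonicity of $G$, so the sum equals $\sum_k L_k$, which must be $\le B^{\max}$ for feasibility. As $\Theta\to0$, $\tilde b_k\to\infty$. Hence a root $\Theta^\star\in(0,\Theta^{\max}]$ with $\sum_k b_k^\star=B^{\max}$ exists and bisection finds it.
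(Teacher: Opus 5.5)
Your route is the paper's route. You use the same Lagrangian with only $\Theta$ and $\mu_k$, the same reason for dropping the energy multiplier (the objective equals the left side of \eqref{eq: b_k constraint 2}), the same stationarity relation $G_{u^\star(k),k}(b_k^\star)=\Theta^\star-\mu_k^\star$ split by complementary slackness, and the same monotone bisection. You also supply what the paper only asserts: that $G_{u,k}=-p_k t_u'$ is strictly decreasing, via $t_u''=Q(2F_u'^2-F_uF_u'')/F_u^3>0$, and the endpoint behaviour of the bisection. When you write out the combine step, state the clamped case explicitly: $G_{u^\star(k),k}(L_k)=\Theta^\star-\mu_k^\star\le\Theta^\star$, hence $G^{-1}_{u^\star(k),k}(\Theta^\star)\le L_k$. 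The paper's appendix writes this inequality as $\ge$, which is the wrong direction for the $\max$ formula.

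The one step that fails as written is your kink paragraph; the paper's ``convex hull of the derivatives'' sentence has the same defect. Suppose $t_i$ and $t_j$ genuinely crossed at some $b^\circ$, with $t_i$ active to the left. Then $G_{i,k}(b^\circ)\ge G_{j,k}(b^\circ)$, and for every $\Theta$ strictly between these two values the KKT point is $b^\circ$ itself. The selection rule in $\tilde b_k$, however, returns $G^{-1}_{j,k}(\Theta)<b^\circ$. At that point $t_i$ is active, so it is not stationary. Moreover $\tilde b_k(\cdot)$ would jump and fail to be nonincreasing, which also breaks your bisection paragraph. The subdifferential being an interval does not rescue this.

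The missing observation is that the kink never occurs. Since $\partial_a \ln\frac{2N_0b+2a}{2N_0b+a}=\frac{2N_0 b}{(2N_0b+2a)(2N_0b+a)}>0$, $F_u(b_k)$ is strictly increasing in $a=|h_u|^2p_k$ for every $b_k>0$. Consequences:
\begin{itemize}
\item The weaker-channel user's $t_u$ dominates on all of $b_k>0$; the two branches coincide if $|h_i|=|h_j|$.
\item $\xi_k$ is differentiable and $u^\star(k)$ is fixed.
\item $\tilde b_k=G^{-1}_{u^\star(k),k}$ is continuous and strictly decreasing.
\end{itemize}

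For the bisection, also record that $G_{u,k}$ maps $(0,\infty)$ onto $(0,\infty)$. As $b_k\to0$, $F_u\to0$ and $F_u'\to1$; as $b_k\to\infty$, $F_u'\to0$ while $F_u$ stays bounded away from $0$. So $G^{-1}_{u,k}(\Theta)$ is defined for all $\Theta>0$, and it is continuity, not just monotonicity, that gives you the root $\Theta^\star\in(0,\Theta^{\max}]$.
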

\proof Please refer to the Appendix \ref{appendix: proof of lemma 2}.

Therefore, the overall problem \eqref{eq: overall problem} can be solved by Algorithm \ref{algorithm: user pairing 1}.

\begin{algorithm}[!t]
\caption{User Pairing Algorithm under Bandwidth Constraints}
\begin{algorithmic}[1]
\State \textbf{Initialization:} a fixed power allocation vector $\boldsymbol{p}$.
\State Calculate $\{C_{ij}\}, \forall i\ne j, i,j \in\mathcal{N}$ according to \eqref{eq: edge weight}.
\State Use a Blossom MWPM problem solver \cite{kolmogorov2009blossom} to generate the top-$W$ candidate pairing schemes $\{\boldsymbol{w}^{(1)},\dots,\boldsymbol{w}^{(W)}\}$ in ascending order of the total distortion.
\Repeat
    \For{$t=1:W$}
        \State For $\boldsymbol{w}^{(t)}$, solve the bandwidth allocation sub-problem \eqref{eq: bandwidth allocation sub-problem} according to Lemma \ref{lemma: bandwidth allocation} to obtain $\boldsymbol{b}^{(t)}$.
        \If{\eqref{eq: bandwidth allocation sub-problem} is \textbf{feasible}}
            \State Set $(\boldsymbol{w}^\star,\boldsymbol{b}^\star)=(\boldsymbol{w}^{(t)},\boldsymbol{b}^{(t)})$; \textbf{break}.
        \EndIf
    \EndFor
\Until a feasible pairing–bandwidth combination $(\boldsymbol{w}^\star,\boldsymbol{b}^\star)$ is found.
\State \textbf{Output:} $(\boldsymbol{w}^\star,\boldsymbol{b}^\star)$.
\end{algorithmic}
\label{algorithm: user pairing 1}
\end{algorithm}

\vspace{-4mm}
\section{Simulation Results}
\vspace{-2mm}
\label{sec: results} 
In this section, we evaluate the proposed SFMA framework and optimization algorithm. We simulate a downlink system with $N=16$ users uniformly distributed in a $500\ \text{m} \times 500\ \text{m}$ area, served by a central BS. The path loss follows $128.1+37.6\log_{10}d$ ($d$ is in $\text{km}$), and the standard deviation of shadow fading is $8\ \text{dB}$. Experiments are performed on ImageNet-100 images of size $3\times256\times256$, with system budgets: $B^{\max}\in[5,40]\ \text{MHz}$, $p_1=\cdots= p_K = 1 \text{W}$, and $E^{\max}=200\ \text{J}$. The proposed method is compared against four baselines: 1) Users are randomly paired, and each group receives equal bandwidth (Random + EqualBW); 2) User pairs are formed by greedily selecting the smallest distortion entries in the distortion matrix $C_{ij}$, followed by equal bandwidth allocation (Greedy + Equal); 3) Users are sorted by channel gain, and each strong user is paired with a weak user. Bandwidth is evenly divided among all groups (Channel-Balanced + EqualBW); 4) Users are randomly paired, but bandwidth is optimally allocated using the closed-form solution derived from the KKT conditions in Lemma \eqref{lemma: bandwidth allocation} (Random + KKT).

\addtolength{\topmargin}{0.02in}

Fig. \ref{fig: PSNR} compares the PSNR performance of DeepJSCC, SFMA without CUA (w/o CA), and SFMA with CUA (w/ CA) under random and similar pairing over an AWGN channel. The results demonstrate that w/ CA consistently outperforms w/o CA across all SNRs, with the largest gain observed in the mid-SNR regime. This gain stems from the CUA module's ability to selectively share semantically relevant features via a cosine-similarity gate, which is most effective when the channel is neither overly noisy nor saturated. Furthermore, similar pairing yields higher PSNR than random pairing, as stronger inter-image correlations enable more informative feature exchange. The nearly overlapping curves of U1 and U2 reflect the near-symmetric effective SNRs achieved by SFMA’s equal-power superimposition within each pair.

\begin{figure}[t]
    \centerline{\includegraphics[width=0.4\textwidth]{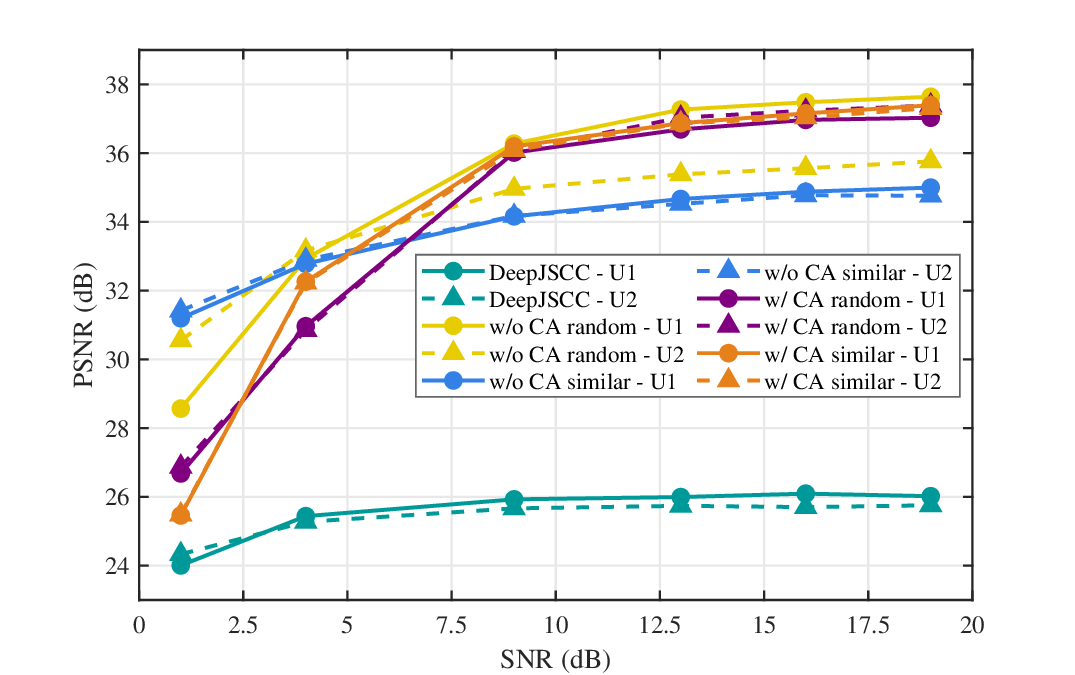}}
    \caption{PSNR versus SNR over AWGN channel.}
    \label{fig: PSNR}
\end{figure}
\vspace{-2mm}
\begin{figure}[t]
    \centerline{\includegraphics[width=0.4\textwidth]{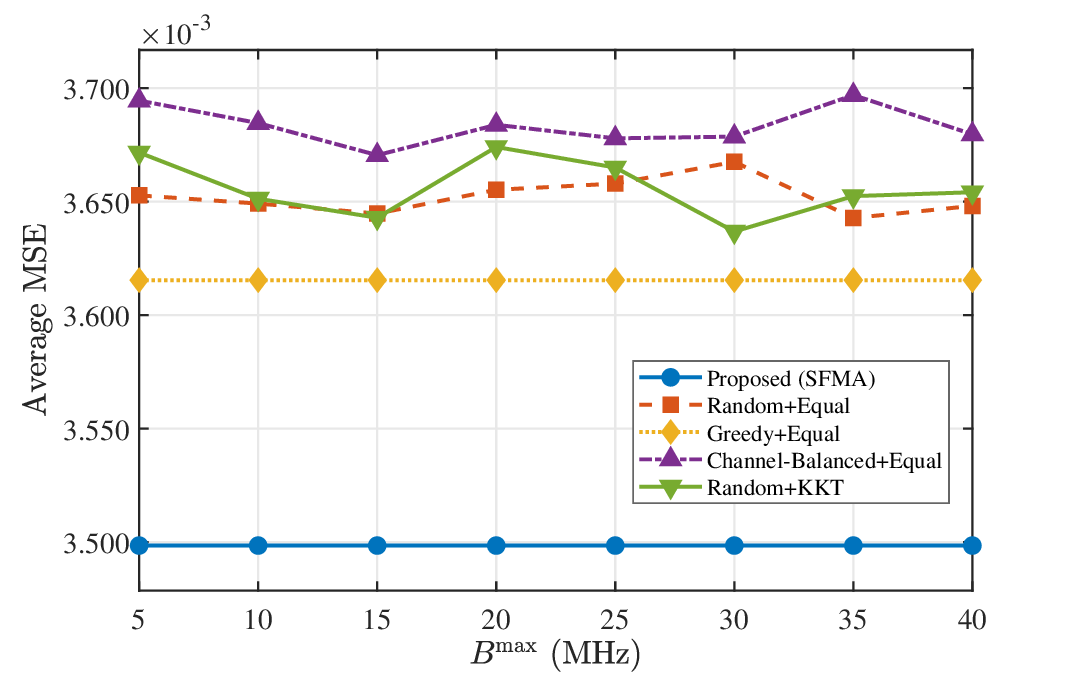}}
    \caption{Average MSE over various $B^{\max}$.}
    \label{fig: average MSE}
\end{figure}

Fig. \ref{fig: average MSE} depicts the average MSE versus the maximum bandwidth $B^{\max}$. The proposed SFMA optimisation achieves the lowest MSE across all bandwidths, outperforming all baselines. Compared to Random+EqualBW and Greedy+EqualBW, it reduces distortion by about $5 \text{--} 10\%$ under moderate bandwidth budgets, owing to its joint use of a precomputed distortion table and convex bandwidth-feasibility region to allocate resources only to feasible and distortion-sensitive pairs. While Channel-Balanced+EqualBW and Random+KKT show moderate gains, they are still inferior due to the lack of joint pairing and resource optimisation. The superiority of our approach arises from the MWPM-based pairing, which minimizes semantic distortion while satisfying latency and energy constraints.

\section{Conclusion} \label{sec: conclusion}
This paper proposed a SFMA framework for semantic-aware downlink multi-user communications, integrating a CUA module to adaptively fuse and superimpose correlated semantic features while preserving reconstruction quality. We further developed a two-stage optimization strategy for user pairing and resource allocation, formulated as a constrained MWPM problem, and established convex bandwidth regions along with a closed-form allocation policy based on inverse gradients. Simulations confirm significant distortion reduction, highlighting SFMA’s potential for 6G semantic communication. Future work will extend SFMA to multi-hop, mobile, and multi-group scenarios using reinforcement learning.

\appendices
\section{Proof of \textbf{Lemma} \ref{lemma: bandwidth allocation}} \label{appendix: proof of lemma 2}
The Lagrange function of problem \eqref{eq: bandwidth allocation sub-problem} is denoted as:
\begin{align}
\label{eq: KKT}
    \mathcal{L}(\boldsymbol{b}, \Theta, \boldsymbol{\mu}) &= \sum_{k=1}^K f_k(b_k) + \Theta\left(\sum_{k=1}^Kb_k - B^{\max}\right) \notag \\
    &+ \sum_{k=1}^K \mu_k (L_k(\boldsymbol{w}) - b_k) ,
\end{align}
where $\boldsymbol{\mu} = [\mu_1, \cdots, \mu_K]$, $\Theta$ are non-negative Lagrange multipliers. 
Since the objective of problem \eqref{eq: bandwidth allocation sub-problem} is the same as the left end of \eqref{eq: b_k constraint 2} and the energy consumption constraint \eqref{eq: b_k constraint 2} is only used for feasibility check, its multiplier is not introduced in \eqref{eq: KKT} and Karush–Kuhn–Tucker conditions (KKT) derivation.
We define the active user index $u^\star(k) \in \{i,j\}$ such that $\xi_k(b_k) =\frac{Q}{F_{u^\star(k)}(b_k)}$. 
$\xi_k(b_k)$ is a convex function since it is the upper bound of two convex decreasing functions, and its set of sub-gradients at the intersection is the convex hull of the derivatives on both sides. Thus, each $u^\star(k) \in \{i,j\}$ satisfies KKT conditions.
From the stationary point condition, we can yield
\begin{align}
    - \left. \frac{\partial f_k}{\partial b_k} \right|_{b_k = b_k^{\star}}= G_{u^{\star}(k),k}(b_k^{\star}) = \Theta^{\star} - \mu_k^{\star}.
\end{align}
From the KKT condition $\mu^\star_k(L_k-b^\star_k) = 0$, we have two cases:
\begin{itemize}
    \item If $b^\star_k > L_k$, then $\mu^\star_k = 0$ and $G_{u^\star(k),k}(b^\star_k) = \Theta^\star$. Therefore, $b^\star_k = G^{-1}_{u^\star(k),k}(\Theta^\star)$.
    \item If $b^\star_{k} = L_k$, we have $G_{u^\star(k),k}(L_k) \geq \Theta^{\star}$.
\end{itemize}
Combining both cases gives $b^\star_k=\max\{L_k(\boldsymbol{w}\},\tilde{b}_k(\Theta^\star))$. The mapping $\sum_{k=1}^Kb^\star_k(\Theta)$ is decreasing with respect to $\Theta$, hence there exists a unique $\Theta^{\star} \in (0,\Theta^{\max}]$ such that $\sum_{k=1}^K b^\star_k = B^{\max}$, which can be found by the bisection method.

\bibliographystyle{ieeetr}
\bibliography{ref}
\end{document}